%
%

\documentclass[aip,cha,reprint,longbibliography]{revtex4-1}

\usepackage{hyperref}
\hypersetup{
    colorlinks=True,       
    linkcolor=blue,          
    citecolor=blue,        
    filecolor=magenta,      
    urlcolor=blue           
}

\usepackage{graphicx}
\usepackage{dcolumn}
\usepackage{bm}

\usepackage[utf8]{inputenc}
\usepackage[T1]{fontenc}
\usepackage{mathptmx}
\usepackage{etoolbox}
\usepackage{braket}

\usepackage{lineno,hyperref}
\usepackage{amssymb}
\usepackage{comment}
\usepackage{amsmath}
\usepackage{amsfonts}
\usepackage{amsthm}
\usepackage{color}
\usepackage{soul,xcolor}
\usepackage{cancel}

\usepackage{epsfig}

\usepackage[ruled, lined]{algorithm2e}

\numberwithin{equation}{section}



\newtheorem{theorem}{Theorem}[section]

\newtheorem{definition}[theorem]{Definition}

\newtheorem{proposition}[theorem]{Proposition}

\newtheorem{remark}[theorem]{Remark}

\makeatletter
\def\@email#1#2{%
 \endgroup
 \patchcmd{\titleblock@produce}
  {\frontmatter@RRAPformat}
  {\frontmatter@RRAPformat{\produce@RRAP{*#1\href{mailto:#2}{#2}}}\frontmatter@RRAPformat}
  {}{}
}%
\makeatother

\begin{document}


\title[]{Time-dependent Personalized PageRank for temporal networks: discrete and continuous scales} 

\author{David Aleja}

\affiliation{Departamento de Matem\'atica Aplicada, Ciencia e Ingenier\'{\i}a de los Materiales y Tecnolog\'{\i}a Electr\'onica, Universidad Rey Juan Carlos, 28933 M\'ostoles (Madrid), Spain}

\affiliation{Laboratory of Mathematical Computation on Complex Networks and their Applications, Universidad Rey Juan Carlos, 28933 M\'ostoles (Madrid), Spain}

\affiliation{Department of Internal Medicine, University of Michigan, Ann Arbor, Michigan 48109, USA}

\affiliation{Data, Complex networks and Cybersecurity Research Institute, Universidad Rey Juan Carlos, 28028 Madrid,Spain}

\author{Julio Flores}
 \email{julio.flores@urjc.es}
 
\affiliation{Departamento de Matem\'atica Aplicada, Ciencia e Ingenier\'{\i}a de los Materiales y Tecnolog\'{\i}a Electr\'onica, Universidad Rey Juan Carlos, 28933 M\'ostoles (Madrid), Spain}

\author{Eva Primo}
\affiliation{Departamento de Matem\'atica Aplicada, Ciencia e Ingenier\'{\i}a de los Materiales y Tecnolog\'{\i}a Electr\'onica, Universidad Rey Juan Carlos, 28933 M\'ostoles (Madrid), Spain}

\affiliation{Laboratory of Mathematical Computation on Complex Networks and their Applications, Universidad Rey Juan Carlos, 28933 M\'ostoles (Madrid), Spain}

\author{Miguel Romance}

\affiliation{Departamento de Matem\'atica Aplicada, Ciencia e Ingenier\'{\i}a de los Materiales y Tecnolog\'{\i}a Electr\'onica, Universidad Rey Juan Carlos, 28933 M\'ostoles (Madrid), Spain}

\affiliation{Laboratory of Mathematical Computation on Complex Networks and their Applications, Universidad Rey Juan Carlos, 28933 M\'ostoles (Madrid), Spain}

\affiliation{Data, Complex networks and Cybersecurity Research Institute, Universidad Rey Juan Carlos, 28028 Madrid,Spain}

\date{\today}

\begin{abstract}
In this paper we explore the PageRank of temporal networks on both discrete and continuous time scales in the presence of personalization vectors that vary over time. Also the underlying interplay between the discrete and continuous settings arising from discretization is highlighted. Additionally, localization results that  set bounds to the  estimated influence of the personalization vector on  the ranking of a particular node are given. The theoretical results are illustrated by means of some real and synthetic examples.

\end{abstract}

\pacs{}

\maketitle

\begin{quotation}
Complex networks are almost everywhere, since a huge amount of systems in nature, society, and technology can be represented as graphs. Most of complex systems can be modeled as a (finite) number of elements that interact/link between them, but in many cases these interactions won't last forever. The study of such time-varying systems is the central goal of Temporal Networks Analysis. 
In this paper we focus on the well-known concept of PageRank for a time-dependant network. More specifically we formally define the PageRank of a network that evolves on either a discrete or a continuous time scale. Additionally, the natural ingredients in the classical defintion of PageRank, namely the transition probability matrix, the damping factor and the personalization vector, are simultanously made time-dependant, which to our knowledge has not being done before in the existing literature.

\end{quotation}


\section{Introduction}\label{sec:intro}

Many real-life complex systems, ranging from the Internet and metabolism to the brain connectome or social networks, can effectively be  represented as graphs. Typically, such networks serve as the framework for a dynamic system, such as data-packet traffic on the Internet or the spread of diseases in social networks, so there is a deep interplay between the structural properties of networks and the dynamical behavior of the processes that take place in these graphs (see, for example, Refs.~[\onlinecite{Review}] and~[\onlinecite{Newman}]). Among these systems,  those whose elements and interactions change over time have links that represent the sequence of interactions between nodes along time. Some examples of such model comes from many different disciplines, including social sciences (in social networks, links represent individual social interactions  such as phone calls or emails that take place from time to time), epidemiological models (that consider contact processes that change along time) or public transportation networks, among many others (see, for example, Refs.~[\onlinecite{Holme01,Holme02, Kostakos}]).

The study of structural properties and dynamical processes that take place on temporal networks have attracted the attention of scientific community in the last years and a sound literature have been produced in order to study these systems. Surprisingly, most of the existing literature on temporal networks  does not explicitly discuss the nature of time used and only considers temporal networks such that the evolution of the structure takes place in a (finite) sequence of discrete events (discrete time-scale). But the intrinsic continuous nature of time suggests considering time-varying systems such that the variations of the interactions take place in a continuous setting (continuous time-scale) and analyzing the interplay between this continuous approach with the classic discrete time-scale case (see for example, Refs.~[\onlinecite{FloresRomance}] and~[\onlinecite{Porter}]). This comparative analysis of any structural or dynamical property of continuous and discrete time-scale networks should enrich the temporal networks knowledge, giving new insights about the  universal dichotomy between continuous and discrete formalism.  Thus our contribution aims at providing the connection between these two  approaches. 


One of the milestones in Networks Science is identifying the nodes that play a central role in a given complex network (see, for example, Refs.~[\onlinecite{Review}] and~[\onlinecite{Newman}]). From transportation and technological networks to social networks, evaluating and ranking nodes based on their significance is essential for comprehending a system's behavior and there is a plethora of mathematical measures, so called centrality measures, that give quantitative solutions to this major problem (see Ref.~[\onlinecite{Newman}]). There are (almost) as many centrality measures in Network Science literature as researches working in Complex Network's Analysis, but spectral-type centrality measures based on positive eigenvectors of some matrices associated to the graph plays a relevant role in Network Analysis.  These measures include the (classic) eigenvector centrality (see Ref.~[\onlinecite{bonacich1971factoring}]) and PageRank centrality (see Ref.~[\onlinecite{PaBrMoWi}]) that illustrate the deep connection between spectral graph theory and Networks Science. 

If we translate the major problem of Centrality Analysis to the Temporal Networks' setting, it is natural considering centrality measures on such time-varying systems that identify  the nodes that play a central role along time, both for temporal networks with continuous and discrete time-scales (see, for example, Refs.~[\onlinecite{Lerman,Pan,Rocha,Takaguchi,Williams}]). In particular, the eigenvector centrality for temporal networks was also considered in Refs.~[\onlinecite{Praprotnik,Taylor,FloresRomance}] and PageRank centrality was also studied in Ref.~[\onlinecite{Porter}]. This paper will be focused on a full temporal approach to PageRank centrality of temporal networks with either continuous or discrete time-scales. It is well known the PageRank, as the stationary state of a stochastic process, has essentially three ingredients: the random navigation on the structure of the network, the damping factor $\lambda\in (0,1)$ and a personalization vector $\mathbf{v}$ (see, for example, Ref.~[\onlinecite{LangvilleMeyer}]). Hence, it is natural to think that if we are to study the PageRank centrality of temporal networks, then the temporal nature should be expected in all its ingredients. Thus, in this paper however we will focus on the time dependence of the matrix $P_A$, the transition probability matrix (see Section \ref{sec:basic}),   the personalization vector and the damping factor. 
Notice that in Ref.~[\onlinecite{Porter}], the PageRank centrality for temporal networks was addressed, although the time dependence was only considered for the random navigation on the structure of the network, leaving the personalization vector constant along time.   But since such personalization vectors can be understood as the surfing preferences of the random walkers, it is natural consider that these surfing preferences could change with time, so in this paper a  PageRank centrality of temporal networks with a personalization vector that possibly varies along time will be considered. Furthermore, some localization results that  set bounds to the  estimated influence of the temporal-varying personalization vector on  the PageRank of any particular node are given that extends the localization results proved in Ref.~[\onlinecite{GaPeRo}] for classic complex networks. On the other hand in Refs.~[\onlinecite{Gleich2014, Rossi2012}] the authors allow the personalization vector to vary while the network structure remains constant and thus our paper somehow comprehends both approaches.

We remark that since the PageRank centrality of temporal networks will be considered for networks with continuous and discrete time-scales,  a comparative analysis should be expected. The connection between continuous and discrete approaches will be established though a sampling process that understands a temporal network with discrete time-scale as a (finite) set of snapshots of a temporal network with continuous time-scale, as it was introduced in Ref.~[\onlinecite{FloresRomance}]. Notice that taking snapshots is equivalent to assuming that the temporal network only changes from time to time in a discrete manner, or, in other words, that the continuous time-scale is being replaced with a discrete one. Thus each collection of snapshots is but a sampling of the whole process. This situation can be improved, of course, by taking  a larger number of observations and, as one would expect, the PageRank of the whole evolving structure should be obtained as a limit process of the partial information (PageRank vector at some precise instant) corresponding to the different static snapshots, as the number of those increases. 

Thus the purpose of this paper is twofold. Firstly we  provide a sound fully-temporal definition of PageRank eigenvector in a temporal network, in both discrete and continuous time-scale and show how the PageRank vector of a continuous temporal network can be nicely approximated by the PageRank vectors of discrete approximations. Secondly we highlight the relevance of considering time-varying personalization vectors, proving also some localization results that give sharp bounds to the  estimated influence of the temporal-varying personalization vector on  the PageRank of any given node. Although time-dependant damping factors are also considered the simulations reveal only a limited influence of this time dependence.

We remind the reader that in Ref.~[\onlinecite{FloresRomance}] several approaches to centralities of temporal networks were given, both in the discrete and continuous settings and the proposed definitions coherently showed that the centrality eigenvector of the discretization of a continuous temporal network could be used as a good estimation of its centrality eigenvector as a continuous temporal network, as one would expect. In this paper  we will focus just on one of those definitions, the \em local heterogeneous \em centrality (see below) which will provide the driving heuristics behind our notion a PageRank vector for continuous networks. The reason is that we are interested in giving a definition  which is coherent, and in fact generalizes, the time-decay approach in Ref.~[\onlinecite{Porter}], with the extra advantage that the personalization vector will also be allowed to vary over time.

It is important to note that although the initial approach in Ref.~[\onlinecite{Porter}] is continuous in nature as the definition of the PageRank vector involves a matrix whose entries are obtained through a differential equation, the authors quickly pass to a discrete model by taking snapshots on the time interval. Coherent with this our approach will initially be discrete and will be  shown to include the definition in Ref.~[\onlinecite{Porter}]. Later we will show that this discrete model can be seen as a discretization of a continuous one that will constitute the  definition of PageRank in a continuous temporal network. Finally we will show that in this context a localization result in the sense above can be given. 

In order to fully show the connection of our contribution to Ref.~[\onlinecite{Porter}] some additional remarks about dangling nodes are given.
Notice that the existence of dangling nodes  prevents from obtaining stochasticity via normalization in matrix $A$. Traditionally this situation was solved by adding some teleportation probability vector that allows to randomly jump from a dangling node as in Ref.~[\onlinecite{LangvilleMeyer}], and this is the approach followed in our paper. We might call this a \em teleportation PageRank \em model. However in Ref.~[\onlinecite{Porter}] a somehow different  approach is taken in as far as no teleportation vector is added but the Monroe Penrose's pseudo-inverse matrix of the matrix $A$ is considered instead. This gives rise to what might be called a \em pseudo PageRank \em model. The good thing is that both approaches are intimately  connected  in the sense that the normalization of the solution vector to the pseudo PageRank problem is in fact the solution to the teleportation PageRank problem (Refs.~[\onlinecite{DelCorsoGulliRomani}] and~[\onlinecite{ArrigoHigham}]). On the other hand it is evident that in our paper all the techniques employed concerning convergence from the discrete to the continuous model refer to standard measure theory and hold independently of which of the approaches above is taken. Hence although in what follows beyond definition no dangling nodes are considered the explanation above justifies that the material in Ref.~[\onlinecite{Porter}] fits into our contribution as a particular case, which still is very interesting on its own.

The structure of the paper is as follows. In Section~\ref{sec:basic} we recall the pertinent definitions including the detailed description of  the Google matrix, and the heuristics behind, associated to the  temporal network whose eigenvector will be the sought PageRank. Section \ref{sec:discrete} deals with discrete time scales  and highlights by means of several examples the role that a time-varying personalization vector has over the PageRank. Section \ref{sec:continuousTime} deals with continuous time scales focusing on the role that the PageRank vectors of the discretizations play in approximating the PageRank vector of the continuous network.
In both Section \ref{sec:discrete} and \ref{sec:continuousTime} the problem of localization is considered. It refers to delimiting the maximum effect that the choice of a personalization vector has over the PageRank of the network.

\section{Notation and some preliminary definitions}\label{sec:basic}

We recall some notation from Refs.~[\onlinecite{GaPeRo}] and~[\onlinecite{Pe16}]. Vectors of $\mathbb{R}^{n \times 1}$ will be denoted by column matrices and we will use the superscript $T$ to indicate matrix transposition. The vector of $\mathbb{R}^{n \times 1}$ with all its components equal to $1$ will be denoted by $\mathbf{e}$. That is, $\mathbf{e}=(1,\cdots,1)^T$.

Let ${\mathcal G}= (V, E)$ be a directed  graph where $V = \{ 1,2, \ldots, n \}$ is the set of nodes and $n\in \mathbb{N}$. The pair $(i,j)$ belongs to the set  $E$ if and only if there exists a link connecting node $i$ to node $j$. The {\it adjacency matrix} of ${\mathcal G}$ is an $n\times n$-matrix
\[
A=(a_{ij})  \hbox{ where } a_{ij}=\left\{
       \begin{array}{ll}
         1, & \hbox{if $(i,j)$ is a link of ${\mathcal G}$}, \\
         0, & \hbox{otherwise.}
       \end{array}\right.
\]
A link $(i,j)$ is said to be an {\em outlink} for node $i$ and an {\em inlink} for node $j$. We denote $k_{out}(i)$ the {\it outdegree} of node $i$, i.e,  the number of outlinks of a node $i$. Notice that $k_{out}(i)=\sum_k a_{ik}$. The graph ${\mathcal G}= (V,E)$ may have {\it dangling nodes}, which are nodes $i\in V$ with zero outdegree. Dangling nodes are characterized by a vector $\mathbf{ d}\in \mathbb{R}^{n \times 1}$  with components $d_i$ defined by
\[
d_i=\left\{
       \begin{array}{ll}
         1, & \hbox{if $i$ is a dangling node of ${\mathcal G}$}, \\
         0, & \hbox{otherwise.}
       \end{array}
     \right.
\]

Notice that in what follows there is no loss of generality in assuming that the graph is weighted in which case $0\le a_{ij}$. This will be particularly relevant in the case or continuous networks in Section \ref{sec:continuousTime}.

Let  $P_A=(p_{ij}) \in\mathbb{R}^{n\times n}$  be the {\it row stochastic matrix} associated to ${\mathcal G}$ defined in the following way:
\begin{itemize}
\item if $i$ is a dangling node, $p_{ij}=0$ for all $j=1,\dots, n$,
\item otherwise, $p_{ij}=\frac{a_{ij}}{k_{out}(i)}=\frac{a_{ij}}{\sum_{k}a_{ik}}$.
\end{itemize}

Note that each coefficient $p_{ij}$ can be considered as the probability of moving from the node $i$ to the node $j$.

We recall that one of the features of the personalized PageRank algorithm is that some {\em extra} probability of jumping is given to any node. This extra or {\em teleportation} probability is assigned by using a {\em personalization vector} ${\bf v}$, which is a probability distribution vector. If, in addition, the graph has dangling nodes then the algorithm needs to assign an additional probability of jumping from these dangling nodes; this is done by introducing a probability distribution vector {\bf u}. With these ingredients, plus a teleportation parameter $\lambda$, we have everything needed to build a primitive and stochastic matrix, called {\it Google matrix}, that we denote by $G$.

Formally, $G=G(\lambda,\mathbf{u},\mathbf{v})$, with $\lambda \in (0,1)$, is defined as
\begin{equation}\label{e:PRMatrix}
G=\lambda(P_A+\mathbf{d}\mathbf{u}^T)+(1-\lambda)\mathbf{e}\mathbf{v}^T\in \mathbb{R}^{n\times n}.
\end{equation}
Note that $G$ is row-stochastic, i.e., $G\mathbf{e}=\mathbf{e}$. Recall that  $\mathbf{v}\in \mathbb{R}^{n \times 1}$, with $\mathbf{v}>0$, i.e. all its entries are positive, and $\mathbf{v}^T\mathbf{e}=1$. Analogously, $\mathbf{u}\in \mathbb{R}^{n \times 1}$ such that $\mathbf{u}>0$  and $\mathbf{u}^T\mathbf{e}=1$.

The {\it PageRank vector} $\mathbf{\pi}=\mathbf{ \pi}(\lambda,\mathbf{u},\bf{v})$ is the unique positive eigenvector of $G^T$ associated to eigenvalue 1 such that $\mathbf{\pi}^T\mathbf{ e}=1$, i.e., $\mathbf{\pi}>0$, $\mathbf{\pi}^T\mathbf{e}=1$ and $\mathbf{\pi}^TG=\mathbf{\pi}^T$ (see Ref.~[\onlinecite{PaBrMoWi}]). Since we focus our interest in ${\bf v}$ we also refer to $\pi$ as the {\em personalized PageRank vector}.

Note also that from (\ref{e:PRMatrix}) we easily have
\begin{equation}\label{PRrel}
\mathbf{\pi}^T = \lambda \mathbf{\pi}^T (P_A+ \mathbf{du}^T) + (1-\lambda) \mathbf{v}^T.
\end{equation}


Although the existence of dangling nodes will not affect the results presented and everything can be adapted by simply considering the matrix $P_A+ \mathbf{du}^T$ we will, as observed in the introduction, assume the absence of dangling nodes, and thus equation above will look as
\begin{equation}
\mathbf{\pi}^T = \lambda \mathbf{\pi}^T P_A + (1-\lambda) \mathbf{v}^T.
\end{equation}

Notice that $$\pi^T(Id-\lambda P_A)=(1-\lambda)\mathbf{v}^T$$ and thus
\[
\pi^T=(1-\lambda)\mathbf{v}^T(Id-\lambda P_A)^{-1}=\mathbf{v}^TX(\lambda),
\]
where $X(\lambda)=(1-\lambda)R(\lambda)$ and $R(\lambda)=(Id-\lambda P_A)^{-1}$ is the resolvent of $P_A$ defined for all suitable $\lambda$ (see Ref.~[\onlinecite{Boldi}]).

If $X(\lambda)=(x_{ij})$, then it has been shown in Ref.~[\onlinecite{GaPeRo}] (Theorem~3.2) that
\[
\pi_i\in (\min_{j}x_{ji}, x_{ii}).
\]

Throughout the paper we will repeatedly make use of this fact. This constitutes a \em localization \em statement as it provides for every $i$ an interval  in which the $i$-coordinate of $\pi$ must be located. 

Thus, in what follows we will consider a temporal  network $(V,E(t))_{t\in I}$   with a fixed set of nodes $V=\{1,\cdots,n\}$ ($n\in\mathbb{N}$) on a  time-scale $I$   where   $A$, $\lambda$ and $\textbf{v}$ are time-dependent. Notice that $I$ can be a discrete time-set  $\{t_1,\dots,t_N\}$ or a (continuous) interval in the reals.  

In this temporal setting the associated   time-dependent row stochastic Google matrix looks like 
\[
G(t)=\lambda(t) P_A(t)+(1-\lambda(t)){\bf e}\textbf{v}^T(t).
\]
As above the matrix $(1-\lambda(t))(Id-\lambda(t) P_A(t))^{-1}$ will be denoted $X(t)$.

We make the important remark that the matrix $G(t)$ is in fact positive due to the personalization term and thus the existence of a  PageRank vector is guaranteed by the classical Perron Theorem (see, for example, Ref.~[\onlinecite{Meyer}]). Since the PageRank vector is the only positive norm one vector associated to the spectral radius we see that stochasticity gives no additional information about the node ranking already provided by the PageRank vector. However it does give  the additional information that the spectral radius must be one and this will be used in the localization results of the paper. We notice that  in Ref.~[\onlinecite{FloresRomance}], where centrality of temporal networks was considered,  the matrices involved were not positive and irreducibility had to be checked in order to use the classic  Perron-Frobenius Theorem (see, for example, Ref.~[\onlinecite{Meyer}]).  

Since we are interested in obtaining a PageRank vector that gives the node-centrality of  a temporal network, there is an obvious (or naive) approach consisting in obtaining for every $t$ the PageRank vector of $G(t)$. However we want to go deeper in the sense that the interaction of a couple of nodes $i$ and $j$ at instant $t$ (which is reflected in the matrix $P_A(t)$ by means of non-zero entry in position $ij$) should leave somehow a print (or a memory effect)  on the interaction at a later stage $s$  even if at stage $s$ there is no actual interaction (as determined by $P_A(s)$). This point of view was considered in Ref.~[\onlinecite{Porter}]  by means of a time-decay model to define those virtual interactions in terms of a simple differential equation that expresses how far back the actual interaction between the nodes took place (the farther back the weaker the print on the current stage) and thus the difference between interaction and interaction's strength naturally arises, according to the terminology used in Ref.~[\onlinecite{Porter}].

As mentioned  above, in this paper we propose a definition of a PageRank of a temporal network which somehow comprehends the approach in Ref.~[\onlinecite{Porter}]. In our  definition  the personalization vector will also be allowed to be time dependent (in contrast to the construction in Ref.~[\onlinecite{Porter}] where the personalization vector remains constant at every stage). Additionally we will let the damping factor be  time dependant as well.
Evidently one would expect that the choice of $\textbf{v}(t)$ influences the  centrality of a given node. In this paper we will also look at the largest  variation interval of the centrality of a given node for all possible personalization vectors. In other words we will obtain a \em localization \em result which delimits the maximum possible effect of the choice of the personalization vector. Thus we will establish the maximum possible variation that the ranking of a node can experiment in terms of the perturbation  caused by  the personalization vector. %

\section{Discrete time scale case}
\label{sec:discrete}

In this section we will focus on temporal networks whose nature requires rewiring to occur in a discrete manner. Hence in what follows the  time scale will be the discrete set $I=\{t_1,\dots,t_N\} \subset\mathbb{R}$ with $t_1<t_2<...<t_N$. 
In what follows our motivation stems from the work on centrality in Ref.~[\onlinecite{FloresRomance}] (see also Refs.~[\onlinecite{Sola,Taylor}]) where the leading heuristics  the centrality of node $j$ has influence on centrality of node $i$ in instant $t_k$  ($l\le k$) {\bf if} in instant $t_l$ there is a link $j\rightarrow i$.  Although an interested reader can find there all the necessary material that motivates our definitions, we will, for the sake of brevity, skip the details. 
Thus, for every $k$, the matrix $P_B(t_k)$, will be the normalization of $B(t_k)$, whose $ij$-entry is 
\[
P_B(t_k)_{ij}=\frac{\displaystyle\sum_{l\le k}\omega_{lk}A(t_l)_{ij}}{\displaystyle\sum_{p=1}^n\sum_{l\le k}\omega_{lk}A(t_l)_{ip}}.
\]
Notice that the denominator is what we called $k_{out}(i)$ above and evidently $P_B(t_k)$ is row stochastic.

The same remarks concerning dangling nodes are in order now and  no dangling nodes will be considered.  Sill it is important to observe that the accumulative character in the definition of $B$ demands for a dangling node to be relevant to remain dangling for every instant $l\le k$; otherwise the denominator of $P_B(t_k)_{ij}$ becomes different from zero and $\textbf{d}(t_k)$ and $\textbf{u}(t_k)$ are no longer needed.

\begin{remark}
The discrete case was modeled over  a finite set of integers. Notice that we could  consider instead the set of all positive integers  as the definition involves, for a given $t_k$, only the instants before and thus a finite sum is still formed. Of course the important ingredient here is the existence of a first element in the index set.
\end{remark}

\begin{definition}\label{def:local}
If $(V,E(t))_{t\in I}$ is a temporal network of $n$ nodes with discrete time-scale  $I=\{t_1,\dots,t_N\}$ and weights $\omega_{lk}\geq 0$ such that $\omega_{kk}\neq 0$ for all $k$, the {\em  
PageRank-eigenvector} at instant $t_k$  is the dominant positive eigenvector 
 $\mathbf{\pi}^T(t_k)=(\pi_1(t_k),\dots, \pi_n(t_k))\in\mathbb{R}^n$ with $\|\pi(t_k)\|_1=1$  associated to  
\begin{equation}
G(t_k)=\lambda(t_k) P_B(t_k)+(1-\lambda(t_k))\textbf{e}\textbf{v}^T(t_k).
\end{equation} 
The 
PageRank at instant $t_k$ of node $i$, $(\pi(t_k))_i$,  is defined as  $\langle\pi(t_k),\textbf{e}_i\rangle$ where $\{\textbf{e}_i\}_{i=1}^n$ is the canonical basis.
\end{definition}

Note at this point that the model given in Ref.~[\onlinecite{Porter}] becomes a particular case of our definition by simply taking 
 $\omega_{lk}=e^{-\alpha(t_k-t_l)}$  while $\alpha>0$ appears in the time-decay ODE supporting the definition. Notice also that the parameter $\alpha$ in Ref.~[\onlinecite{Porter}]  was uniformly chosen for every pair of nodes but this is clearly a restriction that could reasonably be removed. Finally notice that the personalization vector was allowed to vary with time instead of remaining constant as in Ref.~[\onlinecite{Porter}].

 \begin{figure*}
    \begin{center}
    \includegraphics[width=1\textwidth]{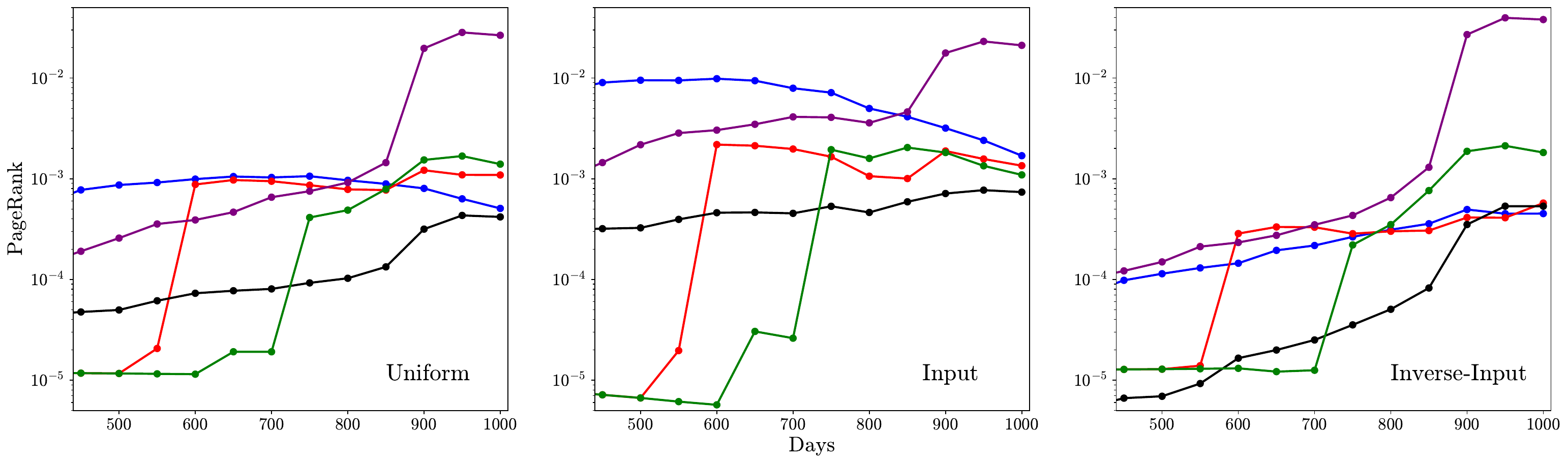}
    \end{center}
    \caption{{\bf  Effects by changing the personalization vector in Definition \ref{def:local}.} PageRank values of five nodes (a color for each node) of Italian Wikipedia temporal network (see Refs.~[\onlinecite{Jerome}] and~[\onlinecite{Preusse}]) corresponding to an uniform (left panel), Input type  (middle panel) and Inverse-Input type (right panel) personalization vector.} 
    \label{Fig1}
\end{figure*}

In order to illustrate how allowing for the personalization vector to vary over time makes our model meaningful we will consider 
 three different personalization vectors that are proportional, respectively,  to the following:

\begin{itemize}
    \item Uniform: $\mathbf{e}=(1,1, \dots, 1)$.
    \item Input: $\displaystyle \mathbf{e}+\sum_{i=1}^n(a_{i1}(t_k),a_{i2}(t_k),\dots, a_{in}(t_k))$.
    \item Inverse-Input: 
    \[
    \left(\dfrac{1}{1+\displaystyle\sum_{i=1}^na_{i1}(t_k)},\dfrac{1}{1+\displaystyle\sum_{i=1}^na_{i2}(t_k)},\dots,\dfrac{1}{1+\displaystyle\sum_{i=1}^n a_{in}(t_k)}\right).
    \]
\end{itemize}
Notice that each of those vectors makes the teleportation probability increase according to different criteria; thus while the Input-type gives special importance to the number of entry links at each stage, the Inverse-Input-type does the opposite, penalizing this number, whereas the uniform-type simply  ignores it. To see how different the results obtained are, we will consider the network of evolving hyperlinks between articles of Italian Wikipedia (see Refs.~[\onlinecite{Jerome}] and~[\onlinecite{Preusse}]). A directed link will be set between two articles in Wikipedia if there is an hyperlink in one of them that leads to the other.   The use of PageRank centrality in the analysis of Wikipedia it has been widely accepted by the scientific community since Wikipedia has become the world largest encyclopedia with open public access and its links structure reminds a structure of the World Wide Web (WWW) (see, for example, Ref.~[\onlinecite{nielsen2012wikipedia}]), so PageRank provide objective relevance scores for hyperlinked documents in Wikipedia (see Ref.~[\onlinecite{eom2015interactions}] and references therein). Furthermore, since Wikipedia is an alive on-line encyclopedia, all its entries and hyperlinks evolve each day, so it is natural analyzing as a temporal network (see, for example, Ref.~[\onlinecite{eom2013time}]).

In the reference Ref.~[\onlinecite{konect}], the temporal network is represented as follows:
- We start from the initial adjacency matrix,
- At each timestamp we add to this matrix a matrix containing at each position 0, +1 or -1, depending on whether that connection remains stable, appears or disappears.
The first timestamp considered will be labeled as day zero, then additional $1000d$ timestamps will be added, where $d$ is the number of seconds in a whole day, and finally the last timestamp will be labeled as day one thousand. The number of nodes in this period is $n=82168$ with $350755$ interactions and $38063$ disconnections. The snapshot-instant set  is $I=\{t_1,t_2,\dots,t_{21}\}$ with $t_k=50(k-1)$ days, the parameters used are $\lambda=0.85$, $\alpha=0.001$ while the different time-influence weights are obtained according to $\omega_{lk}=e^{-\alpha(t_k-t_l)}$.  Moreover, when dangling nodes exist for some $t_k$, we consider $P_B(t_k)+\textbf{d}(t_k)\textbf{v}^T(t_k)$ instead of $P_B(t_k)$ (notice that $\textbf{u}(t_k)$ has been chosen equal to $\textbf{v}(t_k)$). Then  we compute the temporal Page-Rank on $I$ for the different personalization vectors above using the algorithm in  Ref.~[\onlinecite{Aleja}] due to the  existing dangling nodes. As in  Ref.~[\onlinecite{Porter}] more than half of a link's strength in time $t_k$ is expected to linger until $t_k+\eta_{1/2}$ with $e^{-\alpha \eta_{1/2}}=0.5$, or equivalently 693 days, approximately, in this particular case.

The different PageRank values of five nodes corresponding to an uniform (left panel), Input type  (middle panel) and Inverse-Input type (right panel) personalization vector in $[t_{10},t_{21}]$ are represented in Figure~\ref{Fig1}. Nodes have been chosen where most significant changes are observed for the different customization vector choices. Notice how the ranking of a given node changes with the chosen personalization vector; for instance for $t_{21}=1000$ days, the blue node ranks fourth with the uniform vector, second with the Input-type vector and last with the Inverse-Input type vector. Observe also that the blue and purple nodes share the first position with the  Uniform and  Input-type vectors, but it is the purple node which ranks first most of the time, except for a short period when the red node takes over, if the personalization vector is of the Inverse-input type. It thus becomes evident that the choice of the personalization vector determines the temporal PageRank.

Remarkably these changes occur not only on a small set of nodes. Indeed, in  Figure~\ref{Fig2} a), a pairwise comparison of the three temporal PageRanks corresponding to the three types of personalization vectors is provided  by means of the  Kendall Tau coefficient introduced in Ref.~[\onlinecite{Kendall}]. This coefficient is 1 if  two rankings coincide (this includes ties), -1 if they are opposite  and zero if they are no correlated at all. In order to reduce the complexity we make use of the Meger Sort algorithm given in Ref.~[\onlinecite{Knight}]. This  Kendall Tau coefficient is computed for each $t_i$ for the pairs Uniform vs Input, Uniform vs Inverse-Input and Input vs Inverse-Input. Figure~\ref{Fig2} a) shows how the rankings change and in fact that the change is more evident when the Inverse-Input type vector is involved;  this might be explained by the fact that the Inverse-Input type vector goes against the network structure, in contrast with the other two types. 

\begin{figure*}
    \centering
    \includegraphics[width=\linewidth]{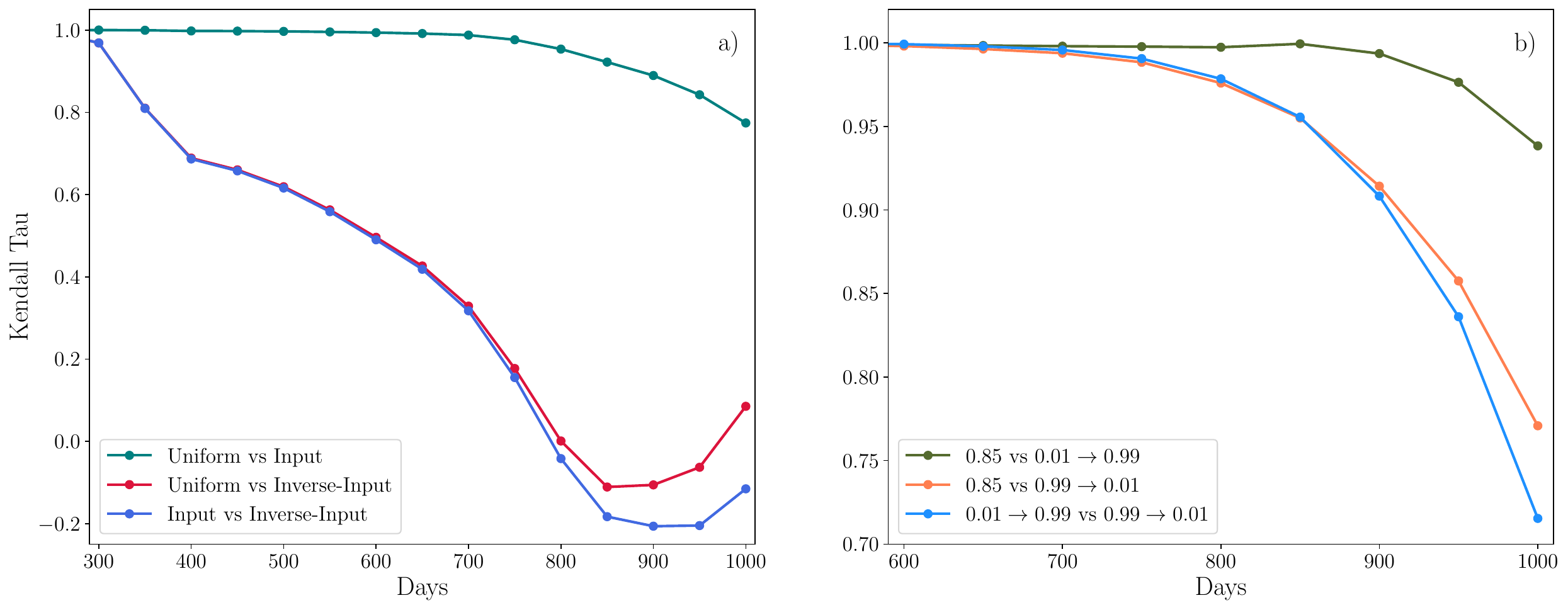}
    \caption{{\bf A comparison by changing the personalization vector or the teleportation parameter in Definition~\ref{def:local}}.  Panel a): Kendall Tau coefficients over time are shown between the three pairs of temporal PageRanks considered, with Uniform, Input and Inverse Input personalization vector. Panel b): Kendall Tau coefficients over time are shown between the three pairs of temporal PageRanks when the teleportation parameter varies.}
\label{Fig2}
\end{figure*}

Also, in Figure~\ref{Fig2} b), we show how the rankings change when the teleportation parameter varies over time. Indeed, considering the previous example with a uniform personalization vector and three different choices for the teleportation parameter: 
\begin{itemize}
    \item $0.85$: $\lambda(t_k)=0.85$,
    \item $0.01\to 0.99$: $\lambda(t_k)=0.01+(0.99-0.01)(k-1)/20$,
    \item $0.99\to 0.01$: $\lambda(t_k)=0.99+(0.01-0.99)(k-1)/20$,
\end{itemize}
in each $t_k$, $k\in \left\{1,2,\dots,21\right\}$, Figure~\ref{Fig2} b) illustrates how the Kendall Tau coefficients change over time when comparing the rankings with each other.

Although the previous examples highlight the relevance of a time-dependent personalization vector and a time-dependent teleportation parameter, the following \em localization result \em sets limits to its maximum possible effect  over $(\pi^T(t_k))_i$.

\begin{theorem} 
\label{LHPRDiscrete}
If $(V,E(t))_{t\in I}$ is a temporal network of $n$ nodes with discrete time-scale  $I=\{t_1,\dots,t_N\}$, then for every $k\in\{1,\dots, N\}$ and every node $i\in\{1,\dots, n\}$  
\[
\min_{j} (X(t_k))_{ji}\le\  ({ \pi}^T(t_k))_i\ \le \max_j (X(t_k))_{ji}=(X(t_k))_{ii},
\]
where $X(t_k)=(1-\lambda(t_k))\left(Id-\lambda(t_k) P_B(t_k)\right)^{-1}.$
\end{theorem}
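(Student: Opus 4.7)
The plan is to reduce the theorem to the static localization result already stated in the excerpt (Theorem~3.2 of~[\onlinecite{GaPeRo}]) applied slice-by-slice at each time $t_k$. Since Definition~\ref{def:local} obtains $\pi(t_k)$ as the stationary left eigenvector of
\[
G(t_k) = \lambda(t_k) P_B(t_k) + (1-\lambda(t_k))\mathbf{e}\mathbf{v}^T(t_k),
\]
the algebraic manipulation carried out in the preliminaries transfers verbatim with $P_A$ replaced by $P_B(t_k)$, $\lambda$ by $\lambda(t_k)$ and $\mathbf{v}$ by $\mathbf{v}(t_k)$. Namely, from $\pi^T(t_k)G(t_k)=\pi^T(t_k)$ together with $\pi^T(t_k)\mathbf{e}=1$ one deduces
\[
\pi^T(t_k)\bigl(Id - \lambda(t_k) P_B(t_k)\bigr) = (1-\lambda(t_k))\mathbf{v}^T(t_k),
\]
and inverting yields $\pi^T(t_k) = \mathbf{v}^T(t_k)\, X(t_k)$.

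First I would justify that $X(t_k)$ is well defined with non-negative entries: since $P_B(t_k)$ is row-stochastic and $\lambda(t_k)\in(0,1)$, the spectral radius of $\lambda(t_k)P_B(t_k)$ is at most $\lambda(t_k)<1$, so $Id-\lambda(t_k)P_B(t_k)$ is invertible and the Neumann expansion
\[
X(t_k) = (1-\lambda(t_k))\sum_{n=0}^{\infty}\lambda(t_k)^n P_B(t_k)^n \geq 0
\]
holds. Because $\mathbf{v}(t_k)$ is a probability vector, the identity $\pi^T(t_k)=\mathbf{v}^T(t_k)X(t_k)$ expresses
\[
(\pi^T(t_k))_i = \sum_{j=1}^n v_j(t_k)\,(X(t_k))_{ji}
\]
as a convex combination of the entries of the $i$-th column of $X(t_k)$. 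The two-sided bound
\[
\min_j (X(t_k))_{ji} \le (\pi^T(t_k))_i \le \max_j (X(t_k))_{ji}
\]
then follows immediately.

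The only non-routine ingredient is the identification $\max_j (X(t_k))_{ji} = (X(t_k))_{ii}$, which is precisely the content of Theorem~3.2 of~[\onlinecite{GaPeRo}] applied to the row-stochastic matrix $P_B(t_k)$, and which I would simply invoke. It has a transparent interpretation: $(X(t_k))_{ji}$ coincides with the PageRank of node $i$ produced by the personalization vector $\mathbf{e}_j$, so column-dominance of the diagonal says that concentrating all teleportation mass at $i$ itself is the choice that maximizes $i$'s stationary mass. Because this assertion concerns a single fixed stochastic matrix, it transports unchanged to each time slice $t_k$, and this is the sole place in the argument where an external result is used.

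In summary, the proof is local in $k$ and depends only on the facts that $P_B(t_k)$ is row-stochastic and $\lambda(t_k)\in(0,1)$. The time-dependence of the transition matrix, damping factor and personalization vector therefore does not interfere with the reasoning, and no machinery beyond the static localization result of~[\onlinecite{GaPeRo}] is needed.
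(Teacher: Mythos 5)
Your proposal is correct and follows essentially the same route as the paper: derive $\pi^T(t_k)=\mathbf{v}^T(t_k)X(t_k)$ from the eigenvector equation, observe that $(\pi^T(t_k))_i$ is a convex combination of the $i$-th column of $X(t_k)$ since $\mathbf{v}(t_k)$ is a probability vector, and invoke Theorem~3.2 of Ref.~[\onlinecite{GaPeRo}] for the identification $\max_j (X(t_k))_{ji}=(X(t_k))_{ii}$. The only difference is that you additionally justify invertibility and non-negativity of $X(t_k)$ via the Neumann series, which the paper omits but which is a harmless (and welcome) extra.
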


\begin{proof}
By, using $\|\pi(t_k)\|_1=1$ we write
\[
\begin{split}
\pi^T(t_k)&=\pi(t_k)^T G(t_k)\\
&=\pi^T(t_k)\lambda(t_k)  P_B(t_k)+(1-\lambda(t_k) )\pi^T(t_k)\textbf{e}\textbf{v}^T(t_k)\\
&=\pi^T(t_k)\lambda(t_k)  P_B(t_k)+(1-\lambda(t_k) )\mathbf{v}^T(t_k)
\end{split}
\]
or $ \pi^T(t_k)=v^T(t_k)X(t_k)$.
 
Again, since $\|v^T(t_k)\|_1=1$ we have
\[
\begin{split}
 \langle\pi(t_k),e_i\rangle &= (v^T(t_k)X(t_k)) e_i = v^T(t_k)(X(t_k)e_i) \\
 &= \sum_{j}v^T(t_k)_jX(t_k)_{ji},
\end{split}
\]
which is a convex combination of the elements of 
$(X(t_k))_{i}$,   the $i$-th column of   $X(t_k)$. Thus
\[
\min_{j} (X(t_k))_{ji}\le\  ((\pi^T)(t_k))_i\ \le \max_j (X(t_k))_{ji}=(X(t_k))_{ii}
\]
 (see Ref.~[\onlinecite{GaPeRo}] (Theorem 3.2)).
 \end{proof}

 \section{Continuous time scale case}\label{sec:continuousTime}

Once the definition of PageRank for temporal networks on a  discrete time-scale has been established we will pass to consider temporal networks on a  {\em continuous time-scale}.  Our approach to a definition of PageRank  in this context will be based in Ref.~[\onlinecite{FloresRomance}], where local heterogeneous centrality of a continuous temporal network was given. As it was shown in  Ref.~[\onlinecite{FloresRomance}] the definition of the (continuous) PageRank will be good in as much as it relates well with the discrete one. In fact the (continuous) PageRank  will be shown to be well approximated by the (discrete) PageRank of the network obtained after  discretization of the initial network. Also, this approximation will be shown to get  sharper as the  discretization is refined. Thus the remarks made in the previous section will justify that the definition given in Ref.~[\onlinecite{Porter}] can be seen as a discretization of our definition. In what follows $I$ will denote the time interval $[t_0,t_1]\subset \mathbb{R}$.

\begin{remark}\label{piece continuity}
For a given temporal network $(V,E(t))_{t\in I}$  it is natural to assume that the function $A:I\longrightarrow M_{n\times n}$ assigning to each $t$ the adjacency matrix of $(V,E(t))$, is a piecewise continuous which is equivalent to the piecewise continuity of every entry function $a_{ij}:I\longrightarrow [0,+\infty)$. Arguably piecewise continuity is in fact the most accurate description of a real situation. However from the technical point of view working with measurable functions, which constitute a natural generalization of  piecewise continuity (see, for example, Refs.~[\onlinecite{Rudin}] and ~[\onlinecite{Cohn}]),  has some advantages and thus piecewise continuity will be replaced by measurability in the definition below. For brevity when there is need,  will say that $(V,E(t))_{t\in I}$ is measurable if all the matrix entries involved are measurable functions while $(V,E(t))_{t\in I}$ will be said  integrable if all the matrix entries  are integrable. Integration will be understood in Lebesgue's sense.
\end{remark}

 Evidently if the function $A$ above is measurable as well as the \em damping function \em $\lambda:I\rightarrow (0,1)$, so must be the  function $G:I\longrightarrow M_{n\times n}$ which assigns to every   $t\in I$  the stochastic matrix  $G(t)=\lambda(t) P_A(t)+(1-\lambda(t)){\bf e}{\bf{v}}^T(t)$. As above we make the remark that the matrix $G(t)$ is positive due to the personalization term and thus the existence of a PageRank vector is guaranteed by the classical Perron Theorem (see, for example, Ref.~[\onlinecite{Meyer}]).  

As in the discrete time setting, given the temporal network $(V,E(t))_{t\in I}$, we are interested in defining  the PageRank vector of $\{G(t)\}_{t\in I}$.

The heuristic above sustaining Definition~\ref{def:local} can easily be translated into the continuous case. Given some integrable (weight) function $\omega:I^2\longrightarrow [0,+\infty)$ satisfying $\omega(s,t)=0$ for $s> t$, we consider the matrix $B$ whose ${ij}$-entry is 
\[
B_{ij}(t)=\int_{t_0}^t \omega(s,t)a_{ij}(s) ds.
\]
As above the normalization of $B$, yields the matrix $P$ whose $i,j-$ entry is defined as 
\[
P_{ij}(t)=\frac{\displaystyle\int_{t_0}^t \omega(s,t)a_{ij}(s) ds}{\displaystyle\sum_{k=1}^n\int_{t_0}^t \omega(s,t)a_{ik}(s) ds}.
\]
As above the matrix 
\[
G(t)=\lambda(t) P(t)+(1-\lambda(t)) e v^T(t)
\] 
is formed. Notice that the remark above about dangling nodes remains pertinent in this section. In order to avoid inconsistencies in  the definition above we need to take care of  $P_{ij}(t_0)$ as the formula above evidently renders $0/0$ in $t=t_0$. We will do so by imposing the existence of some $\epsilon>0$ such that $\omega(s,t)$ is continuous in $[t_0,t_0+\epsilon]^2$ -or \em right-up \em continuous in $t_0$-and and $a_{ij}(s)$ is right continuous  in $t_0$ for all $i,j$. Notice that we said above that piece-wise continuity was the good property to impose on the functions involved regarding real situations while measurability was only introduced for technical reasons. So, our assumption is reasonable. Assuming thus that  $\omega(s,t)$ is continuous in $[t_0,t_0+\epsilon]^2$ and $a_{ij}(s)$ is continuous in $[t_0,t_0+\epsilon]$ we can, by the  Mean Value Theorem for integrals, find for every $k\in\{1,\dots, n\}$ some  $s_k\in(t_0,t)$ such that 
\[
\int_{t_0}^t \omega(s,t)a_{ij}(s)ds=\omega(s_k,t)a_{ij}(s_k)(t-t_0)
\]
and thus 
\[
\begin{split}
    P_{ij}(t)&=\frac{\omega(s_j,t)a_{ij}(s_j)(t-t_0)}{\displaystyle\sum_{k=1}^n\omega(s_k,t)a_{ik}(s_k)(t-t_0)}\\
    &=\frac{\omega(s_j,t)a_{ij}(s_k)}{\displaystyle\sum_{k=1}^n\omega(s_k,t)a_{ik}(s_k)}.
\end{split}
\]
Evidently, if $t\rightarrow t_0$ and we assume, as done in the discrete setting,  $\omega(t_0,t_0)\neq 0$ we get 
\[
\frac{\omega(s_j,t)a_{ij}(s_k)}{\displaystyle\sum_{k=1}^n\omega(s_k,t)a_{ik}(s_k)}\longrightarrow \frac{\omega(t_0,t_0)a_{ij}(t_0)}{\displaystyle\sum_{k=1}^n\omega(t_0,t_0)a_{ik}(t_0)}=p_{ij}(t_0),
\]
where $P_A(t)=(p_{ij}(t))$ was the normalization of the adjacency matrix $A(t)$ in instant $t$. This suggests to define
\[
P_{ij}(t_0)=p_{ij}(t_0).
\]

Notice that there is no need to impose $a_{ij}(t_0)\neq 0$ because the denominator becomes $k_{out}(i)$ which is never zero in the absence of dangling nodes.  All the above is collected in the following.

\begin{definition}\label{def:cont-local}
Let $I=[t_0,t_1]\subset\mathbb{R}$ and let $(V,E(t))_{t\in I}$ be an integrable temporal network  of $n$ nodes with $a_{ij}$ right continuous in $t_0$ for all $i,j$. Let $\omega:I^2\longrightarrow [0,+\infty)$ be an integrable function satisfying $\omega(s,t)=0$ for $s> t$,  $\omega(t_0,t_0)\neq 0$  and such that  it is right-up continuous in $t_0$. Let $\lambda:I\rightarrow (0,1)$ be measurable. If $G=\{G(t)\}_{t\in I}$ as above, then the {\em 
PageRank} of $G$ is a function $\pi:I\longrightarrow \mathbb{R}^n$ with $\pi_i(t)\ge 0$ for all $1\le i\le n$ and $\|\pi(t)\|_1=1$ for all  $t\in I$ such that $ \pi(t)^TG(t)=\pi(t)^T$.
\end{definition}

Observe that this definition in the continuous setting is the natural counterpart of the discrete version. 

\begin{proposition}\label{prop:local}
Let $(V,E(t))_{t\in I}$ be  an integrable temporal network  of $n$ nodes  and  $G=\{G(t)\}_{t\in I}$ as above, then:
\begin{itemize}
 \item[{\it (i)}] 
 There always exists a unique 
 PageRank. Furthermore $\pi$ is measurable.
 \item[{\it (ii)}]
 $\pi(t)^T=v^T(t)X(t)$
where 
\[
X(t)=(1-\lambda(t))\left(Id-\lambda(t) P_B(t)\right)^{-1}.
\]  
\end{itemize}
\end{proposition}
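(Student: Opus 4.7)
The plan is to break the proof into pointwise existence/uniqueness, a closed form, and then measurability in $t$, essentially mirroring the argument given for Theorem~\ref{LHPRDiscrete} at each instant and then promoting the result to a statement about the function $\pi$.

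First I would establish (i) pointwise. Fix $t\in I$. By construction $G(t)=\lambda(t)P(t)+(1-\lambda(t))\mathbf{e}\mathbf{v}^T(t)$ is row stochastic (a convex combination of row stochastic matrices, since $P(t)$ is row stochastic by definition and $\mathbf{e}\mathbf{v}^T(t)$ is row stochastic because $\mathbf{v}^T(t)\mathbf{e}=1$), and is strictly positive because $\lambda(t)\in(0,1)$ and $\mathbf{v}(t)>0$. The Perron Theorem (Ref.~[\onlinecite{Meyer}]) then yields a unique positive eigenvector of $G(t)^T$ associated to the spectral radius, normalized so that $\|\pi(t)\|_1=1$, and stochasticity forces this spectral radius to be $1$. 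This gives existence and uniqueness of $\pi(t)$ for every $t$.

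Next I would derive (ii), which is the exact analogue of the computation in the proof of Theorem~\ref{LHPRDiscrete}. Using $\mathbf{e}^T\pi(t)=1$,
\[
\pi(t)^T=\pi(t)^T G(t)=\lambda(t)\pi(t)^T P(t)+(1-\lambda(t))\mathbf{v}^T(t),
\]
so that
\[
\pi(t)^T\bigl(\mathrm{Id}-\lambda(t)P(t)\bigr)=(1-\lambda(t))\mathbf{v}^T(t).
\]
Since $P(t)$ is row stochastic its spectral radius equals $1$, and $\lambda(t)\in(0,1)$ implies $\lambda(t)P(t)$ has spectral radius strictly less than $1$; hence $\mathrm{Id}-\lambda(t)P(t)$ is invertible (via a convergent Neumann series). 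Multiplying on the right yields $\pi(t)^T=\mathbf{v}^T(t)X(t)$ as claimed.

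Finally, for the measurability assertion in (i), I would argue through the formula in (ii). The integrability of the network plus the integrability of $\omega$ gives, by Fubini/standard parameter-integral arguments, that each $B_{ij}(t)=\int_{t_0}^t \omega(s,t)a_{ij}(s)\,ds$ is measurable in $t$; since we are assuming no dangling nodes the denominator in $P_{ij}(t)$ is bounded away from zero on compact pieces, so $P(t)$ is measurable as a matrix-valued function of $t$. Combined with the measurability of $\lambda$, this makes $t\mapsto \mathrm{Id}-\lambda(t)P(t)$ measurable and everywhere invertible, and since matrix inversion is continuous on the open set of invertible matrices it is Borel measurable, so $t\mapsto (\mathrm{Id}-\lambda(t)P(t))^{-1}$ is measurable. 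Multiplying by the measurable scalar $(1-\lambda(t))$ gives measurability of $X(t)$, and then $\pi(t)^T=\mathbf{v}^T(t)X(t)$ is measurable in $t$.

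The main obstacle I expect is the bookkeeping around measurability: one must verify that the denominator of $P_{ij}(t)$ does not vanish (handled by the no-dangling-nodes assumption and the convention $P_{ij}(t_0)=p_{ij}(t_0)$ from the definition) and that $\mathrm{Id}-\lambda(t)P(t)$ is uniformly invertible enough for the inversion to be carried out pointwise, although in fact pointwise invertibility plus Borel measurability of the inversion map is all that is needed. The algebraic part reducing $\pi(t)$ to $\mathbf{v}^T(t)X(t)$ is the same one-line computation as in the discrete case.
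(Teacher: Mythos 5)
Your proof is correct, and for existence, uniqueness, and the identity $\pi(t)^T=\mathbf{v}^T(t)X(t)$ it coincides with the paper's argument (pointwise Perron theorem for the positive stochastic matrix $G(t)$, followed by the same one-line algebraic manipulation; the paper does not even spell out the invertibility of $Id-\lambda(t)P(t)$, which you justify via the Neumann series). Where you genuinely diverge is the measurability claim in (i): the paper disposes of it by invoking Lusin's Theorem, following the scheme of Ref.~[\onlinecite{FloresRomance}], where one approximates the measurable matrix family by continuous ones on large compact sets and uses continuous dependence of the dominant eigenvector on the matrix. You instead read measurability directly off the closed form $\pi(t)^T=\mathbf{v}^T(t)(1-\lambda(t))\bigl(Id-\lambda(t)P(t)\bigr)^{-1}$, using measurability of the parameter integrals $B_{ij}(t)$ and Borel measurability of matrix inversion on the set of invertible matrices. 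Your route is more elementary and self-contained here, precisely because PageRank (unlike the general eigenvector centrality treated in Ref.~[\onlinecite{FloresRomance}]) admits a resolvent formula; the Lusin argument is the one that survives when no such formula is available. Two small caveats in your version: measurability of $t\mapsto\mathbf{v}(t)$ must be assumed (the paper also uses this implicitly), and for the quotient defining $P_{ij}(t)$ you only need the denominator to be measurable and nonvanishing, not bounded away from zero; nonvanishing for $t>t_0$ rests on the hypotheses on $\omega$ and the absence of dangling nodes, a point the paper itself leaves tacit.
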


\begin{proof}
Evidently  $G(t)$ is stochastic for every $t$. Also the existence and uniqueness of the local heterogeneous PageRank stems again from the classic Perron theorem while the measurability of $\pi$ comes from Lusin's Theorem (see, for example, Ref.~[\onlinecite{Rudin}] for Lusin's Theorem and Ref.~[\onlinecite{FloresRomance}] for more details). This justifies {\it(i)}.  As for {\it (ii)} we only need to observe once again the following straightforward verification
\[
\begin{split}
    \pi^T(t)&=\pi^T(t)G(t)=\pi^T(t)\lambda(t) P(t)+(1-\lambda(t))\pi^T e v^T(t)\\
    &=\pi^T(t)\lambda(t) P(t)+(1-\lambda(t)) v^T(t).
\end{split}
\]
Hence $\pi(t)^T=v^T(t)(1-\lambda(t))\left(Id-\lambda(t) P(t)\right)^{-1}$.
\end{proof}

Theorem~\ref{LHPRDiscrete} provides in the discrete case a localization result which gives the maximum variation that the PageRank coordinates experiment no matter what personalization vector is used, even if it evolves with time.  
In this section we will derive an analogous localization result when time runs in an interval of the reals $I=[t_0, t_1]\subset\mathbb{R}$.

The approach, as one would expect, consists of  taking a sample of   finitely many instants $I_N=\{s_k\}_{k=0}^N\subset I$  to obtain  a set of snapshots $\{G(t)\}_{t\in I_N}$ of the  network $G$ which becomes now a discrete temporal network $G^N$ on a the discrete time-scale $I_N$. The PageRank vector of this discrete network will be given a maximum variation interval for each of its coordinates according to the localization result above. A pass to the limit argument together with  a continuity argument will provide then a maximum variation interval for the coordinates of the PageRank vector of our continuous temporal network. In this passing  to the limit  the diameter of $I_N$, that is the maximum distance between two consecutive instants in $I_N$, will be required to  tend to zero as $N$ grows.

This approach was already used in Ref.~[\onlinecite{FloresRomance}] when the  centrality of continuous temporal network was obtained as limit of  the centralities of its snapshots. 

Thus, in what follows $I_N$ will mean a uniform partition of  $I$, i.e. a finite subset $I_N=\{s_k\}_{k=1}^N$ of $I$ with {\em equidistant} subdivision points. The diameter of the partition is the biggest length of the subintervals in $I$ determined by the set $I_N$. For practical reasons the partition $I_{N+1}$ will also be a uniform partition which strictly refines $I_N$, i.e it includes all subdivision points of $I_N$ as a proper subset of its points. Notice that the diameter of $I_N$  surely tends to zero,  as $N$ increases.

\begin{definition}\label{def:truncation}
Let  $(V,E(t))_{t\in I}$ be a continuous temporal network  with $n$ nodes and let   $I_N=\{s_{k}\}_{k=1}^N$  be a partition of $I$. The discrete network  $(V, E(s_k))_{{s_k\in I_N}}$ is called the {\sl truncation} of $(V,E(t))_{t\in I}$ corresponding to   $\{s_{k}\}_{k=1}^N$.
\end{definition}

For each $N$ we consider $(V, E(s_k))_{{s_k\in I_N}}$,  the truncation of $(V,E(t))_{t\in I}$ and its  associated PageRank vector $\pi_N$ obtained in  Section~\ref{sec:discrete}.  The matrix $(1-\lambda(s_k))(Id-\lambda(s_k) P_B)^{-1}$ associated to $(V, E(s_k))_{s_k\in I_N}$ will be denoted $X_N$  while $X$ will still refer to the one corresponding to $(V,E(t))_{t\in I}$.

The following result is the main ingredient to make our approach meaningful. The  proof is based on the well-known fact that a (Lebesgue) measurable function can be nicely approximated by a sequence of simple functions (the samplings).  The technical details can be found in Ref.~[\onlinecite{FloresRomance}] with the evident adaptations.

The next theorem connects both continuous and discrete settings. For technical reasons the natural identification of the  PageRank $\pi_N$ vector with the measurable step function, also denoted $\pi_N$, on the interval $I$ which assigns the value $\pi_N(s_k)$ to all points in the interval 
$[s_{k},s_{k+1})$ will be made without further comments. 
\begin{theorem} 
Given an integrable temporal network $(V,E(t))_{t\in I}$ of $n$ nodes  and  $G=\{G(t)\}_{t\in I}$ as above, let $\pi$, denote the PageRank  vector of $G$ and $\pi_N$ the PageRank vector of the truncation $(V, E(s_k))_{{k=1\dots N}}$ where the diameter of the associate partition goes to zero as $N$ increases. Then 
\[
\pi(t)=\lim _{N\rightarrow \infty}\pi_N(t),
\]for all $t\in I$.
\end{theorem}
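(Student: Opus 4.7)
The plan is to leverage the closed-form expression from Proposition~\ref{prop:local}: both the continuous PageRank and the discrete PageRank for the truncation are of the shape $\pi(t)^T = v^T(t) X(t)$ and $\pi_N(t)^T = v^T(s_{k_N(t)}) X_N(s_{k_N(t)})$, where $s_{k_N(t)}$ is the left endpoint of the subinterval of $I_N$ that contains $t$. Since the diameter of the partition $I_N$ tends to zero, $s_{k_N(t)} \to t$ as $N\to\infty$. Thus the statement reduces to showing that $X_N(s_{k_N(t)}) \to X(t)$ entrywise, together with $v(s_{k_N(t)}) \to v(t)$ and $\lambda(s_{k_N(t)})\to \lambda(t)$ in the appropriate sense (addressed at the end).

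The main step is convergence of the normalized weighted-interaction matrices $P_B(s_{k_N(t)}) \to P(t)$. First I would observe that the denominators in the continuous and discrete definitions are positive (by the non-dangling hypothesis together with $\omega(t_0,t_0)\neq 0$ and right continuity at $t_0$), so the ratios are well-defined. Next, I would argue that for a fixed pair of indices $i,j$, the discrete sum $\sum_{l\le k}\omega_{lk}\,a_{ij}(s_l)$ — when the weights $\omega_{lk}$ are identified with $\omega(s_l, s_k)\Delta s_N$ for the uniform spacing $\Delta s_N$ of $I_N$ — is a Riemann-type sum for $\int_{t_0}^t \omega(s,t)a_{ij}(s)\,ds$. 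The common factor $\Delta s_N$ cancels in the ratio defining $P_B$, so it is enough to show that the numerators and denominators converge (after dividing by $\Delta s_N$) to the corresponding continuous integrals.

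This convergence of Riemann sums is routine when the integrands are continuous, but here they are only measurable/integrable. I would handle this as in the argument referenced from~[\onlinecite{FloresRomance}]: approximate each measurable entry $\omega(\cdot,t)a_{ij}(\cdot)$ by a sequence of simple (step) functions in $L^1$, for which the Riemann-type sum converges to the integral by construction, and then pass the approximation through using dominated convergence (with an integrable dominant coming from integrability of $\omega$ and boundedness of the $a_{ij}$ on compact subintervals). This is the technically most delicate step; the fact that the convergence is asserted for every $t\in I$ rather than merely a.e.\ is the main obstacle and forces the argument to go through Lusin's theorem (or equivalently, to interpret the statement after a redefinition of $v$ and $\lambda$ on a null set), exactly as in the analogous centrality result of~[\onlinecite{FloresRomance}].

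Once $P_B(s_{k_N(t)}) \to P(t)$, the convergence $X_N \to X$ is immediate from the continuity of matrix inversion on its domain: the matrices $\mathrm{Id} - \lambda(t) P(t)$ are invertible because $\lambda(t)\in(0,1)$ and $P(t)$ is row-stochastic, so the spectral radius of $\lambda(t)P(t)$ is strictly less than one, and the map $M\mapsto (\mathrm{Id}-\lambda M)^{-1}$ is continuous there. Combining this with $v(s_{k_N(t)})\to v(t)$ and $\lambda(s_{k_N(t)})\to \lambda(t)$ (again, after the Lusin-type redefinition on a null set when needed) yields
\[
\pi_N(t)^T = v^T(s_{k_N(t)}) X_N(s_{k_N(t)}) \longrightarrow v^T(t) X(t) = \pi(t)^T,
\]
which is the desired limit.
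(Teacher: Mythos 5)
Your plan follows essentially the same route as the paper, which does not write out a detailed proof but only remarks that the argument rests on approximating measurable functions by simple functions (the samplings) and refers to the analogous centrality result in Ref.~[\onlinecite{FloresRomance}] for the technical details. Your reduction via $\pi^T=v^TX$, the Riemann-sum/simple-function approximation of $P_B$, the continuity of $M\mapsto(\mathrm{Id}-\lambda M)^{-1}$, and the honest flagging of the ``for every $t$'' versus ``a.e.'' issue (handled via Lusin's theorem) are exactly the ingredients the paper's sketch presupposes.
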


 \begin{figure*}
    \centering
    \includegraphics[width=1\linewidth]{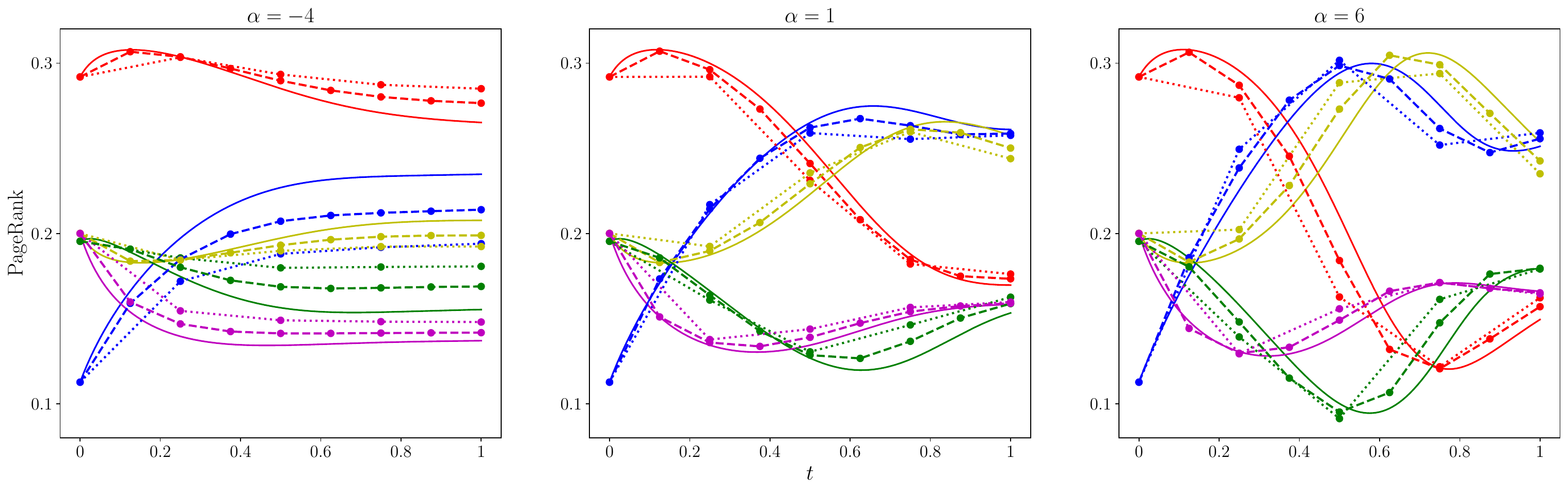}
    \caption{{\bf  The discrete PageRank (Definition \ref{def:local}) approximates to the continuous PageRank (Definition \ref{def:cont-local}).} PageRank values of five nodes (a color for each node) of a synthetic graph \eqref{synthetic} given by Definition \ref{def:cont-local} (continuous line) and by Definition \ref{def:local} with $N=5$ (dotted line) and with $N=9$ (dashed line) for the weights $\omega(s,t)=e^{-\alpha (t-s)}$ with $\alpha=-4$ (left panel), $\alpha=1$ (middle panel) and $\alpha=6$ (right panel).}
    \label{Fig3}
\end{figure*}

As an illustration we construct a non-directed continuous synthetic network with five nodes and the following links over the time-interval $I=[0,1]$, 
\begin{alignat}{2}
\label{synthetic}
&a_{12}=\frac 12 (\sin(2\pi t)+1), \quad  && a_{35}=0.5, \cr  
&a_{34}=\frac 1e (e^t-1),          \quad  && a_{25}=t^2, \cr
&a_{14}=\frac 12(\cos(2\pi t)+1),  \quad  && a_{23}=1-(t-1)^2.
\end{alignat}
Notice the absence of dangling nodes in this network, which is compatible with Definition~\ref{def:cont-local}. We will also consider the uniform personalization vector  $\mathbf{e}=(1,1, \dots, 1)$, the usual value $\lambda=0.85$ and the weights $\omega(s,t)=e^{-\alpha (t-s)}$ and  $\omega_{lk}=\omega(t_l,t_k)$. Next we compute both the continuous  (Definition~\ref{def:cont-local}) and the discrete (Definition~\ref{def:local}) PageRank  for two uniform discretizacions on $[0,1]$ with $N=5$ and $N=9$, respectively.  In Figure~\ref{Fig3} the continuous line represents the continuous PageRank, the dotted line represents the discrete PageRank on $I_5$ while the dashed line is the discrete PageRank on $I_9$; each colour represents each of the five nodes and three values of $\alpha$ are given, namely $\alpha=-4$ (left panel), $\alpha=1$ (middle panel) and $\alpha=6$ (right panel). As expected  the discrete approximations are better with increasing $N$. Additionally, if the  PageRank is computed on $I_{101}$, the difference with respect to the continuous  PageRank  reaches its maximum until $3 \cdot 10^{-3}$ units.    

We conclude with the announced localization result in the continuous setting. The proof is similar to the one given in the discrete setting. Notice that we might as well proceed by using the discrete result together with a dominated convergence theorem plus sandwich argument (Ref.~[\onlinecite{Cohn}]).

\begin{theorem}
With the notation above, we have, for every $t\in I$ and every $i\in\{1,\dots, n\}$  
\[
\min_{j} (X(t))_{ji}\le\  ({ \pi}^T(t))_i\ \le \max_j (X(t))_{ji}=(X(t))_{ii},
\]
where $X(t)=(1-\lambda)\left(Id-\lambda P_B(t)\right)^{-1}.$
\end{theorem}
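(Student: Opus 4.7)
The plan is to mimic the discrete proof of Theorem~\ref{LHPRDiscrete} by invoking the key identity already established in the continuous setting, namely Proposition~\ref{prop:local}(ii), which asserts $\pi(t)^T = v(t)^T X(t)$. Reading off the $i$-th coordinate gives
\[
\pi_i(t) \;=\; \bigl(v(t)^T X(t)\bigr)_i \;=\; \sum_{j=1}^n v_j(t)\,X(t)_{ji},
\]
so $\pi_i(t)$ is an explicit linear combination of the entries of the $i$-th column of $X(t)$ with coefficients $v_j(t)$.

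Next I would exploit the fact that, by the standing assumption on the personalization vector, $v(t)$ is a probability vector for each $t$: its entries are nonnegative and $\mathbf{v}(t)^T\mathbf{e}=1$. Hence the expression above is a \emph{convex combination} of $\{X(t)_{ji}\}_{j=1}^n$, and any convex combination of real numbers is bounded below by their minimum and above by their maximum. This yields the two inequalities
\[
\min_{j} (X(t))_{ji}\;\le\; \pi_i(t)\;\le\; \max_j (X(t))_{ji}
\]
pointwise in $t$, without needing to pass to any limit.

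The only nontrivial step is the equality $\max_j (X(t))_{ji}=(X(t))_{ii}$, i.e.\ diagonal dominance of the columns of $X(t)$. This is not a direct consequence of the convex combination argument; instead it is exactly the statement of Theorem~3.2 in Ref.~[\onlinecite{GaPeRo}] applied to the single stochastic matrix $G(t)$ for a fixed instant $t$. Since at each fixed $t$ we have a genuine row-stochastic primitive matrix with damping factor $\lambda(t)$ and $P_B(t)$ is row-stochastic, the hypotheses of that theorem are satisfied and the diagonal entry of $X(t)$ dominates its column, which closes the argument. I expect this to be the main obstacle in the sense that it is the only place where one must invoke external machinery rather than just measure-theoretic or elementary linear-algebraic facts.

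As the authors note, an alternative route would be to apply Theorem~\ref{LHPRDiscrete} to each truncation $(V,E(s_k))_{s_k\in I_N}$, obtaining
\[
\min_{j} (X_N(t))_{ji}\;\le\;(\pi_N(t))_i\;\le\;\max_j (X_N(t))_{ji},
\]
and then let $N\to\infty$. The previous theorem gives $\pi_N(t)\to\pi(t)$, while the entries of $X_N(t)$ converge to those of $X(t)$ because $P_B$ depends continuously (in fact integrably) on the entries of $A$ via the integral defining $B_{ij}(t)$; a dominated convergence argument in the spirit of Ref.~[\onlinecite{Cohn}], combined with the standard sandwich principle for limits of inequalities, then transfers the bounds to the continuous PageRank. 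This second path is a bit longer but has the conceptual merit of reusing the discrete result rather than re-deriving it.
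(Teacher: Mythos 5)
Your proposal is correct and follows essentially the same route as the paper: the authors state that the proof is ``similar to the one given in the discrete setting,'' i.e.\ one reads $\pi(t)^T=v(t)^TX(t)$ from Proposition~\ref{prop:local}(ii), observes that $\pi_i(t)$ is a convex combination of the $i$-th column of $X(t)$ since $v(t)$ is a probability vector, and invokes Theorem~3.2 of Ref.~[\onlinecite{GaPeRo}] for the diagonal dominance $\max_j(X(t))_{ji}=(X(t))_{ii}$. Your alternative route via truncations, dominated convergence and a sandwich argument is also exactly the one the authors mention in passing, so both paths you describe match the paper.
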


\begin{acknowledgments}
This work has been partially supported by projects M2978 and M3033 (URJC Grants) and PID2019-107701G (Ministerio de Ciencia e Innovaci\'on Grant, Spanish Government).
\end{acknowledgments}
\vspace{5mm}

The data that support the findings of this study are openly available
in http://konect.cc/networks/, reference number  Ui (link-dynamic-itwiki).

\section*{References}
\bibliography{Temporalbib}

\end{document}